\newcommand{\be} {\begin{eqnarray*}}
	\newcommand{\ee} {\end{eqnarray*}}
\theoremstyle{definition}
\newtheorem{theorem}{Theorem}
\newtheorem{lemma}{Lemma}
\def\*#1{\bm{#1}}
\title{Bayesian Neural Networks vs. Mixture Density Networks: Theoretical and Empirical Insights for Uncertainty-Aware Nonlinear Modeling}
\author[1]{Riddhi Pratim Ghosh}
\author[2]{Ian Barnett}
\affil[1]{Department of Mathematics and Statistics, Bowling Green State University}
\affil[2]{Department of Biostatistics, University of Pennsylvania}
\date{}
\begin{document}
\maketitle
\begin{abstract}
	This paper investigates two prominent probabilistic neural modeling paradigms: Bayesian Neural Networks (BNNs) and Mixture Density Networks (MDNs) for uncertainty-aware nonlinear regression. While BNNs incorporate epistemic uncertainty by placing prior distributions over network parameters, MDNs directly model the conditional output distribution, thereby capturing multimodal and heteroscedastic data-generating mechanisms. We present a unified theoretical and empirical framework comparing these approaches. On the theoretical side, we derive convergence rates and error bounds under Hölder smoothness conditions, showing that MDNs achieve faster Kullback–Leibler (KL) divergence convergence due to their likelihood-based nature, whereas BNNs exhibit additional approximation bias induced by variational inference. Empirically, we evaluate both architectures on synthetic nonlinear datasets and a radiographic benchmark (RSNA Pediatric Bone Age Challenge). Quantitative and qualitative results demonstrate that MDNs more effectively capture multimodal responses and adaptive uncertainty, whereas BNNs provide more interpretable epistemic uncertainty under limited data. Our findings clarify the complementary strengths of posterior-based and likelihood-based probabilistic learning, offering guidance for uncertainty-aware modeling in nonlinear systems.
\end{abstract}

\vspace{1cm}

\textit{Keywords}: Bayesian Neural Network; Mixture Density Network; Uncertainty Quantification; Variational Inference; Multimodal Regression; KL Divergence; Nonlinear Modeling.

\newpage
\section{Introduction} \label{intro}

Modeling complex, non-linear, and uncertain relationships between input and output variables remains a central challenge in modern statistical learning and artificial intelligence. Traditional neural networks, trained via point estimation, have demonstrated remarkable success in a variety of domains but inherently provide deterministic predictions - that is, single-valued outputs without accompanying measures of uncertainty. This limitation becomes critical in domains characterized by limited, noisy, or ambiguous data, such as medicine, climate science, or finance, where quantifying uncertainty is as important as producing accurate predictions \citep{gal2016dropout, kendall2017uncertainties, abdar2021review}.

Bayesian Neural Networks (BNNs) provide a probabilistic extension of standard neural networks by treating weights and biases as random variables endowed with prior distributions \citep{mackay1992practical, neal1996bayesian}. Through Bayes’ theorem, BNNs infer a posterior distribution over weights, allowing predictions to reflect epistemic uncertainty - the uncertainty arising from limited data and model knowledge. However, the exact posterior is analytically intractable for deep models, motivating approximate inference methods such as variational inference \citep{graves2011practical, blundell2015weight} and Monte Carlo dropout \citep{gal2016dropout}. Despite their appeal, these approaches may yield biased or overconfident posteriors due to restrictive variational families \citep{hernandez2015probabilistic, osband2023epistemic}, often resulting in over-smoothed predictive distributions.

An alternative paradigm for probabilistic modeling is the Mixture Density Network (MDN), introduced by \cite{bridle1990probabilistic} and developed further by \cite{jacobs1991adaptive}. Unlike BNNs, which encode uncertainty through distributions over parameters, MDNs model the conditional output density $p(y\vert x)$ directly as a weighted mixture of distributions (often Gaussians). The network outputs the parameters of the mixture - weights, means, and variances—thus allowing it to represent multimodal and heteroscedastic conditional structures \citep{bishop1994mixture, bishop1995neural}. This makes MDNs particularly effective in problems where multiple plausible outcomes exist for a single input, such as inverse problems, motion prediction, or medical diagnosis \citep{graves2013generating, laina2016deeper, tagasovska2019single}.

While both BNNs and MDNs provide probabilistic predictions, they emphasize different uncertainty sources. BNNs model epistemic uncertainty due to limited knowledge about the parameters. MDNs model aleatoric uncertainty and multimodality inherent in the data-generating process. Consequently, a systematic comparison of these approaches yields valuable insight into their complementary roles in uncertainty quantification.

Several recent directions have deepened the study of probabilistic neural modeling within both variational and likelihood-based frameworks. Variational methods have evolved through richer posterior families such as normalizing flows \citep{rezende2015variational, Louizos2017} and scalable gradient estimators \citep{Kingma2015, Wen2018}, significantly improving approximation flexibility. Concurrently, likelihood-based approaches have advanced through mixture and density modeling \citep{bishop1994mixture, bishop1995neural, graves2013generating, tagasovska2019single}, emphasizing expressive output distributions and data-driven uncertainty. These developments reflect two complementary paradigms - posterior-based inference and likelihood-based density estimation - whose relative theoretical properties remain insufficiently understood. This paper addresses that gap through a unified theoretical and empirical comparison of Bayesian Neural Networks and Mixture Density Networks.

In this paper, we conduct a comprehensive empirical and theoretical comparison of Bayesian Neural Networks and Mixture Density Networks for modeling nonlinear, potentially multimodal data. Both models are implemented using the PyTorch framework and are evaluated on synthetic datasets, generated from nonlinear functions with additive Gaussian noise, designed to exhibit multimodality and heteroscedasticity; and real-world data, namely the RSNA Pediatric Bone Age Challenge (2017) radiographic dataset, where uncertainty-aware prediction is crucial for clinical decision support. Our contributions are threefold: (i) Empirical comparison – We assess BNNs and MDNs on predictive calibration, uncertainty quantification, and multimodal representation using visualization and Kullback–Leibler (KL) divergence metrics, (ii) Theoretical analysis – We derive explicit approximation and estimation error bounds for both architectures, demonstrating that MDNs achieve faster KL-convergence rates under standard Hölder smoothness conditions, while BNNs incur additional terms due to prior and variational approximation, and (iii) Practical insights – We show that MDNs outperform BNNs in capturing multimodal outputs and adaptive uncertainty, whereas BNNs provide more interpretable epistemic uncertainty when data are scarce.

These findings integrate and extend insights from previous studies on Bayesian deep learning \citep{neal1996bayesian, blundell2015weight}, mixture modeling \citep{mclachlan2000finite}, and PAC-Bayesian generalization bounds \citep{mcallester1998some, catoni2012challenging}. The resulting framework clarifies theoretical trade-offs between posterior-based and likelihood-based uncertainty modeling, providing guidance for uncertainty-aware neural modeling in nonlinear systems.

The rest of this article is organized as follows. Section~\ref{prelim} introduces preliminary foundations of probabilistic neural modeling, including key divergence measures and sources of uncertainty. Section~\ref{bnn_limit} examines the Bayesian Neural Network framework, outlining its formulation, inference process, and main limitations. Section~\ref{mdn_section} presents the Mixture Density Network approach, highlighting its likelihood-based structure and theoretical advantages in modeling multimodal and heteroscedastic data. The prediction error bounds are compared in Section~\ref{theory}. Section~\ref{sim} reports empirical evaluations on both synthetic, and a real-world analysis is presented in Section~\ref{real_dat_analysis} . Finally, Section~\ref{discuss} concludes this article with a discussion.

\section{Preliminaries}\label{prelim}
Let $P$ and $Q$ denote two continuous probability distributions over $\mathcal{Y}$. The Kullback–Leibler (KL) divergence is defined as
\[
D_{\mathrm{KL}}(P\|Q) = \int \log\!\left(\frac{p(y)}{q(y)}\right) p(y)\,dy,
\]
and measures the discrepancy between two probability laws. Relatedly, the Rényi divergence of order $\alpha>0$, $\alpha\neq1$, is defined by
\[
D_\alpha(P\|Q)=\frac{1}{\alpha-1}\log\!\left(\int p(y)^{\alpha} q(y)^{1-\alpha}\,dy\right),
\]
which generalizes KL divergence as $\alpha\!\to\!1$. These divergences underpin much of the theoretical analysis of probabilistic neural models, especially in bounding approximation errors between the true data-generating distribution and model-implied predictive densities.

Uncertainty in neural models can broadly be classified into two categories:
\begin{itemize}
    \item\textbf{Epistemic uncertainty}, arising from limited knowledge of model parameters or data scarcity. This is typically captured via Bayesian inference over weights, as in BNNs.
    \item \textbf{Aleatoric uncertainty}, stemming from inherent stochasticity or multimodality in the data-generating process. MDNs are explicitly designed to capture this type through mixture-based likelihoods.
\end{itemize}

Throughout this paper, we consider the nonlinear regression setting where data $\mathcal{D}=\{(x_i,y_i)\}_{i=1}^n$ are generated according to an unknown stochastic process $Y=f^\ast(X)+\varepsilon$, with $f^\ast$ smooth and $\varepsilon$ heteroscedastic. The objective is to estimate the conditional density $p(y|x)$ and quantify uncertainty in predictions.

\section{Limitations of Bayesian Neural Networks}\label{bnn_limit}
BNNs replace deterministic weights with random variables, inducing a posterior distribution $p(\mathbf{w}|\mathcal{D})$ over network parameters. Predictive inference marginalizes over this posterior:
\[
p(y|x,\mathcal{D})=\int p(y|x,\mathbf{w})\,p(\mathbf{w}|\mathcal{D})\,d\mathbf{w}.
\]
Since exact inference is intractable, approximate methods - most notably variational inference - seek a tractable surrogate distribution $q_\phi(\mathbf{w})$ minimizing $D_{\mathrm{KL}}(q_\phi\|\!p)$. This yields the \emph{Evidence Lower Bound} (ELBO):
\[
\mathcal{L}_{\mathrm{ELBO}} = \mathbb{E}_{q_\phi(\mathbf{w})}[\log p(\mathcal{D}|\mathbf{w})]
- D_{\mathrm{KL}}(q_\phi(\mathbf{w})\,\|\,p(\mathbf{w})).
\]

A large and active literature has developed around variational approaches for scalable Bayesian neural networks, aiming to reduce approximation bias and improve posterior expressivity. Early work formulated variational optimization for neural network weights using stochastic gradient methods and the reparameterization trick \citep{graves2011practical,Kingma2013}, while the \emph{Bayes by Backprop} framework extended these ideas specifically to deep networks \citep{blundell2015weight}. Subsequent research proposed richer variational families and more expressive posteriors, such as normalizing-flow and multiplicative-flow distributions \citep{Rezende2015,Louizos2017}, and introduced local reparameterization tricks and variance-reduction schemes for efficient mini-batch training \citep{Kingma2015}. Complementary deterministic approximations, including probabilistic backpropagation and the Flipout estimator, trade off scalability and variance in different ways \citep{Hernandez2015,Wen2018}. Non-parametric and particle-based approaches, notably Stein variational gradient descent, approximate the posterior without specifying a variational density \citep{Liu2016}. Collectively, these advances mitigate several limitations of classical mean-field variational BNNs and motivate hybrid architectures that combine expressive variational posteriors with likelihood-flexible output layers to capture both epistemic and aleatoric uncertainty.

Although conceptually elegant, ELBO optimization introduces several limitations:

\begin{enumerate}
    \item \textbf{Variational bias.} Restrictive variational families (e.g., Gaussian mean-field) often underestimate posterior uncertainty, producing overconfident predictions.
    \item \textbf{Training instability.} ELBO optimization requires careful balancing between likelihood and KL terms; poor scaling can cause mode collapse or posterior drift.
    \item \textbf{Computational burden.} Sampling-based gradient estimation and large parameter spaces render BNN training computationally expensive compared to deterministic networks.
    \item \textbf{Limited expressiveness.} BNNs primarily capture epistemic uncertainty but struggle to represent multimodal conditional distributions inherent in stochastic processes.
\end{enumerate}

To illustrate, we consider a synthetic dataset generated by
\[
Y = \sin(2\pi X) + 0.5\cos(6\pi X) + \varepsilon, \quad \varepsilon\sim\mathcal{N}(0,\sigma^2),
\]
where $X\sim\mathrm{Unif}(0,1)$. This data exhibits multimodal and nonlinear patterns. A two-layer BNN trained by ELBO minimization captures average trends but fails to resolve multimodal regions - reflecting its inability to express multiple plausible outputs for the same input.

These limitations have practical consequences in real-world tasks such as medical imaging or autonomous perception, where uncertainty estimation must encompass both epistemic and aleatoric components. The next section introduces the Mixture Density Network as a flexible alternative.

\section{Mixture Density Networks}\label{mdn_section}
A Mixture Density Network (MDN) \citep{bishop1994mixture,bishop1995neural} combines a standard feed-forward neural architecture with a parametric mixture model, typically Gaussian. Instead of outputting a single deterministic value, the MDN outputs the parameters of a conditional mixture distribution:
\[
p(y|x;\Theta) = \sum_{k=1}^K \pi_k(x)\,\mathcal{N}\!\left(y\,\middle|\,\mu_k(x),\,\sigma_k^2(x)\right),
\]
where $\pi_k(x)$, $\mu_k(x)$, and $\sigma_k^2(x)$ are network-generated mixture weights, means, and variances, respectively, and $\sum_k \pi_k(x)=1$. This formulation enables the MDN to represent multimodal, heteroscedastic, and asymmetric conditional relationships directly.

Training proceeds by maximizing the log-likelihood of the observed data:
\[
\mathcal{L}_{\mathrm{MDN}} = \sum_{i=1}^n \log p(y_i|x_i;\Theta),
\]
using gradient-based optimization. Because MDNs model the conditional density explicitly, they avoid the variational approximations inherent in BNNs and can capture both aleatoric and structural uncertainty.

In summary, BNNs encode uncertainty through parameter distributions, while MDNs directly model the conditional output density. The former is posterior-based (epistemic), the latter likelihood-based (aleatoric). These complementary formulations motivate the theoretical comparison that follows.
\section{Theoretical results}\label{theory}

In this section, we compare the performance of MDN and BNN in terms of their prediction accuracy. We begin by stating a few assumptions on the model class, choice of prior, etc. 

\paragraph{Assumptions.}
\begin{itemize}
  \item[(A1)] (\textbf{True model}). The true conditional density $f^\ast(y\mid x)$ on $\mathbb{R}$ given $x\in\mathcal{X}\subset\mathbb{R}^d$ admits an $M$-component Gaussian mixture representation
  \[
    f^\ast(y\mid x)=\sum_{m=1}^M \pi_m(x)\,\phi(y;\mu_m(x),\sigma_m^2(x)),
  \]
  and the parameter functions $\pi_m,\mu_m,\sigma_m$ are $s$-Hölder continuous on $\mathcal{X}$.
  \item[(A2)] (\textbf{Boundedness and non-degeneracy}). There exist constants $\varepsilon\in(0,1/2)$ and $0<\sigma_{\min}<\sigma_{\max}<\infty$ such that, for all $x\in\mathcal{X}$ and $m$,
  \[
    \pi_m(x)\in[\varepsilon,1-\varepsilon],\qquad \sigma_m(x)\in[\sigma_{\min},\sigma_{\max}].
  \]
  \item[(A3)] (\textbf{Model class / network parametrization}). For integers $K\ge M$ and width parameter $n$, let $\mathcal{F}_{n,K}$ denote the class of $K$-component Gaussian mixtures whose parameter functions $(\pi_k,\mu_k,\sigma_k)$ are implemented by ReLU neural networks of width at most $n$. Let $C(n,K,d)$ denote a suitable complexity measure of $\mathcal{F}_{n,K}$ (for example the pseudo-dimension or a log-covering-number proxy).
  \item[(A4)] (\textbf{Estimator}). The estimator $\widehat f_{n,K}$ is an empirical-risk minimizer (ERM) over $\mathcal{F}_{n,K}$ using the negative log-likelihood on $N$ i.i.d. samples $(X_i,Y_i)_{i=1}^N$.
  \item[(A5)] (\textbf{Variational family / prior }). For the Bayesian analysis we assume a prior $\pi(w)$ on network weights and a variational family $\mathcal{Q}$ that is rich enough to place mass concentrated near network weights realizing good approximating networks (this is standard; see discussion in the main text).
\end{itemize}

We first present the auxiliary lemmas required by the proofs and then give the proofs of the two main theorems.

\begin{lemma}[Finite-mixture identity]
\label{lem:mixture-approx}
Under assumptions (A1)–(A2), if the number of mixture components $K \ge M$, then the true conditional density $f^*(y \mid x)$ can be represented exactly by a $K$-component Gaussian mixture. Consequently,
\[
\sup_{x \in \mathcal{X}} 
  \mathrm{KL}\big(f^*(\cdot \mid x) \,\|\, f_K(\cdot \mid x)\big) = 0.
\]
\end{lemma}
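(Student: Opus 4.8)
The plan is to exhibit an explicit $K$-component Gaussian mixture that reproduces $f^\ast(\cdot \mid x)$ exactly for every $x \in \mathcal{X}$, and then to observe that the KL divergence of a density from an identical copy of itself vanishes. Since (A1) already provides the representation $f^\ast(y \mid x) = \sum_{m=1}^M \pi_m(x)\,\phi(y; \mu_m(x), \sigma_m^2(x))$, the entire content of the lemma reduces to embedding this $M$-component mixture inside the larger $K$-component family without perturbing the density. Note that this is a statement purely about mixture representations; the ReLU-network parametrization of (A3) plays no role here and is deferred to the approximation-error analysis in the main theorems.

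The most direct construction I would use is zero-padding: for $k \le M$ set $(\tilde\pi_k, \tilde\mu_k, \tilde\sigma_k) = (\pi_k, \mu_k, \sigma_k)$, and for $M < k \le K$ set $\tilde\pi_k \equiv 0$ with the remaining parameters fixed at any admissible values (for instance $\tilde\sigma_k \equiv \sigma_{\min}$). Writing $f_K(y \mid x) = \sum_{k=1}^K \tilde\pi_k(x)\,\phi(y; \tilde\mu_k(x), \tilde\sigma_k^2(x))$, the padded components contribute nothing, so $f_K(y \mid x) = f^\ast(y \mid x)$ for all $(x,y)$. If one instead insists that the weights of $f_K$ remain strictly positive, I would replace zero-padding by component splitting: duplicate a single component $K - M + 1$ times and redistribute its weight equally among the copies. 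This again leaves the density unchanged while producing $K$ genuine components with positive weights.

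Having matched the densities pointwise, the divergence is immediate. For each fixed $x$ the integrand $\log\!\big(f^\ast(y \mid x) / f_K(y \mid x)\big) = \log 1 = 0$ wherever $f^\ast(y \mid x) > 0$, and the integral defining $\mathrm{KL}\big(f^\ast(\cdot \mid x) \,\|\, f_K(\cdot \mid x)\big)$ is taken against the measure $f^\ast(y \mid x)\,dy$, so it evaluates to $0$. As this holds for every $x \in \mathcal{X}$, the supremum of a family of zeros is zero, giving $\sup_{x} \mathrm{KL}\big(f^\ast(\cdot \mid x) \,\|\, f_K(\cdot \mid x)\big) = 0$.

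There is essentially no analytic obstacle here; the only point requiring care is the bookkeeping around admissibility of the padded (or split) parameters. Under a literal reading of (A2)–(A3), the constraints $\pi_m(x) \in [\varepsilon, 1-\varepsilon]$ and $\sigma_m(x) \in [\sigma_{\min}, \sigma_{\max}]$ are imposed on the \emph{true} model rather than on the approximating class $\mathcal{F}_{n,K}$, so zero weights are permissible and nothing further is needed; should the class definition be tightened to forbid them, the splitting construction restores strict positivity without disturbing the identity $f_K = f^\ast$. I would flag this interpretive choice explicitly so that the exact-representation claim is unambiguous.
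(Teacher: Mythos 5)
Your proposal is correct and follows essentially the same route as the paper: embed the $M$ true components into a $K$-component mixture by padding, then note that the KL divergence between identical densities vanishes pointwise in $x$. If anything, your execution is tighter than the paper's, which pads with weights ``set arbitrarily small'' (a phrasing that, read literally, gives only $f_K \approx f^*$ rather than the exact identity), whereas your zero-padding and weight-splitting constructions both yield $f_K = f^*$ exactly and handle the strict-positivity caveat cleanly.
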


\begin{proof}
The proof is deferred to \Cref{proof:lem:mixture-approx}
\end{proof}

\begin{lemma}[ReLU approximation of H\"older functions]
\label{lem:relu-approx}
Let $g:\mathcal{X}\to\mathbb{R}$ be $s$–Hölder continuous on a compact domain $\mathcal{X}\subset\mathbb{R}^d$, i.e.
$|g(x)-g(x')| \le L\|x-x'\|_\infty^s$ for all $x,x'$. Then, for every integer $n\ge1$, there exists a ReLU network $\tilde g_n$
of width at most $n$ (and depth depending on $s,d$) such that
\[
\|g-\tilde g_n\|_{\infty,\mathcal{X}} \le C_{\mathrm{app}}\, n^{-s/d},
\]
where $C_{\mathrm{app}}>0$ depends only on $s,d,L$, and $\mathrm{diam}(\mathcal{X})$.
\end{lemma}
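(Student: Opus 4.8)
The plan is to separate the argument into a purely approximation-theoretic step and a network-realization step. First I would reduce to the unit cube: after an affine rescaling of coordinates (which rescales the H\"older constant $L$ by a factor depending only on $\mathrm{diam}(\mathcal{X})$ and $s$), it suffices to construct the approximant on $[0,1]^d$. The target is a continuous piecewise-linear (CPWL) function $g_h$ that approximates $g$ to accuracy $O(n^{-s/d})$ and has a number of affine pieces proportional to $n$; the second step is then to exhibit a ReLU network of width $O(n)$ realizing $g_h$ exactly.

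For the approximation step, I would lay down a regular grid on $[0,1]^d$ with mesh width $h \asymp n^{-1/d}$, so that the number of cells is $\asymp n$, and let $g_h$ be the piecewise-linear interpolant of $g$ on an associated simplicial (e.g. Kuhn) triangulation of the grid. Because $g$ is $s$-H\"older, the interpolation error is controlled by the local oscillation of $g$: writing $g_h(x)$ as a convex combination $\sum_\nu \lambda_\nu(x)\,g(x_\nu)$ of vertex values with $\lambda_\nu(x)\ge 0$ and $\sum_\nu \lambda_\nu(x)=1$, one has
\[
|g(x)-g_h(x)| = \Big|\sum_\nu \lambda_\nu(x)\big(g(x)-g(x_\nu)\big)\Big| \le \max_\nu |g(x)-g(x_\nu)| \le L h^s,
\]
since every vertex $x_\nu$ of the cell containing $x$ satisfies $\|x-x_\nu\|_\infty \le h$. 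Taking the supremum over $x$ and substituting $h \asymp n^{-1/d}$ yields $\|g-g_h\|_{\infty,[0,1]^d} \le C_{\mathrm{app}}\, n^{-s/d}$, with $C_{\mathrm{app}}$ depending only on $s,d,L$ and the rescaling constant, as claimed.

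The network-realization step is where the real work lies. The function $g_h$ is continuous and piecewise linear, so I would invoke the exact representation of CPWL functions by ReLU networks: every CPWL function on $\mathbb{R}^d$ is computable by a ReLU network whose depth is $O(\log d)$ (hence depends only on $d$) and whose width scales with the number of affine pieces, here $O(n)$. This keeps the depth a function of $s,d$ only and the width at most a constant multiple of $n$, which can be absorbed by reparametrizing $n$ (shrinking the cell count by a constant factor changes only $C_{\mathrm{app}}$). The main obstacle I anticipate is precisely this realization: the naive partition-of-unity construction $g_h = \sum_\nu g(x_\nu)\,\psi_\nu$ uses tensor-product hat functions $\psi_\nu(x)=\prod_{j=1}^d \psi(x_j)$, and a single ReLU layer cannot compute products, so a direct implementation would require approximating the $d$-fold product via the standard squaring/sawtooth gadget, inflating the depth to $O(\log(1/\delta))$ and forcing a delicate budgeting of the product-approximation error $\delta$ against the $n^{-s/d}$ target. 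Using the simplicial interpolant together with the exact CPWL-to-ReLU correspondence sidesteps the product entirely and is the cleanest route to the constant-depth, width-$n$ network the statement demands; the remaining task is the routine but slightly tedious bookkeeping that the CPWL representation indeed fits within width $O(n)$.
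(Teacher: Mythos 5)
Your proposal is correct in substance but takes a genuinely different route from the paper. The paper's proof is essentially a citation: it invokes the constructive approximation theorems of Yarotsky (2017) and Lu et al. (2017), which deliver width $O(n)$ but depth $O(\log n)$ --- a bound that in fact sits in mild tension with the lemma's own claim that the depth depends only on $s,d$. Your construction --- rescale to $[0,1]^d$, take the piecewise-linear interpolant on a Kuhn triangulation with mesh $h \asymp n^{-1/d}$, then realize the CPWL interpolant exactly by a ReLU network --- is self-contained and yields genuinely constant depth, matching the lemma statement better than the paper's cited results do. This works because the Hölder condition as literally stated, $|g(x)-g(x')| \le L\|x-x'\|_\infty^s$, is nontrivial only for $s \le 1$ (for $s>1$ it forces $g$ to be constant on a connected domain), so first-order interpolation already attains the optimal rate $n^{-s/d}$; your error estimate via the convex combination $\sum_\nu \lambda_\nu(x) g(x_\nu)$ is correct. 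Had the lemma been intended for higher-order smoothness $s>1$ in the $C^{\lfloor s\rfloor}$-with-Hölder-remainder sense, piecewise-linear interpolation would saturate at rate $n^{-2/d}$ and one would need the Taylor-plus-squaring-gadget constructions of the paper's references, with depth growing in $n$ --- worth a sentence of acknowledgment.

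One caution on the realization step: the general-purpose theorem you invoke --- every CPWL function on $\mathbb{R}^d$ is ReLU-computable in depth $O(\log d)$ with width \emph{linear} in the number of affine pieces --- is not available in that generality. The known lattice (max-min) representations (Wang--Sun; Arora et al.) give depth $\lceil\log_2(d+1)\rceil+1$ but width that can grow polynomially in the number of pieces with exponent of order $d$. Your argument is rescued by the local structure of your interpolant: write $g_h = \sum_\nu g(x_\nu)\,\psi_\nu$ with nodal hat functions $\psi_\nu$ on the uniform Kuhn triangulation; each $\psi_\nu$ is supported on a number of simplices depending only on $d$ and is a max-min of $O_d(1)$ affine functions, hence realizable by $O_d(1)$ neurons in depth $\lceil\log_2(d+1)\rceil+1$ (this is the finite-element-to-ReLU construction of He, Li, Xu, and Zheng), and the $\asymp n$ hat functions are computed in parallel and summed by the output layer, for total width $C(d)\,n$ --- absorbed by reparametrizing $n$ exactly as you indicate. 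With that substitution in place of the over-general CPWL claim, your proof is complete, and as you correctly observe it avoids the product-of-hats obstruction that would otherwise force depth growing with the accuracy.
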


\begin{proof}
See \Cref{proof:relu-approx}.
\end{proof}

\begin{lemma}[Sup-norm parameter perturbation $\Rightarrow$ KL control]
\label{lem:sup-to-kl}
Let $f$ and $\tilde f$ be $K$–component Gaussian mixtures satisfying (A2) with parameter functions
$(\pi_k,\mu_k,\sigma_k)$ and $(\tilde\pi_k,\tilde\mu_k,\tilde\sigma_k)$.
Define
\[
\varepsilon_{\mathrm{par}}
  := \max_k\{\|\pi_k-\tilde\pi_k\|_\infty,\|\mu_k-\tilde\mu_k\|_\infty,\|\sigma_k-\tilde\sigma_k\|_\infty\}.
\]
Then, for sufficiently small $\varepsilon_{\mathrm{par}}$, there exists $C_{\mathrm{KL}}>0$ depending only on
$\sigma_{\min},\sigma_{\max},\varepsilon$ such that
\[
\sup_{x\in\mathcal{X}} \mathrm{KL}\!\big(f(\cdot\mid x)\,\|\,\tilde f(\cdot\mid x)\big)
 \le C_{\mathrm{KL}}\,K\,\varepsilon_{\mathrm{par}}^2.
\]
\end{lemma}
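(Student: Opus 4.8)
The plan is to bound the KL divergence pointwise in $x$ and then take the supremum over $\mathcal{X}$. Fix $x\in\mathcal{X}$ and abbreviate the component densities $p_k(y)=\mathcal{N}(y;\mu_k(x),\sigma_k^2(x))$ and $\tilde p_k(y)=\mathcal{N}(y;\tilde\mu_k(x),\tilde\sigma_k^2(x))$, so that $f(\cdot\mid x)=\sum_k\pi_k(x)p_k$ and $\tilde f(\cdot\mid x)=\sum_k\tilde\pi_k(x)\tilde p_k$. Rather than expand $\log(f/\tilde f)$ directly---which is awkward since $1/\tilde f$ grows in the tails and the $\chi^2$-type integral needs separate tail control---I would lift the problem to the joint law of $(k,y)$, where $k$ is the latent component index. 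This is the device that makes the whole argument clean.

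Concretely, define joint distributions $P(k,y)=\pi_k p_k(y)$ and $Q(k,y)=\tilde\pi_k\tilde p_k(y)$, whose $y$-marginals are exactly $f(\cdot\mid x)$ and $\tilde f(\cdot\mid x)$. Since marginalization is a deterministic Markov kernel, the data-processing inequality for KL together with the chain rule on the product space gives
\[
\mathrm{KL}\big(f(\cdot\mid x)\,\|\,\tilde f(\cdot\mid x)\big)\le \mathrm{KL}(P\|Q)=\mathrm{KL}(\pi\|\tilde\pi)+\sum_{k=1}^K\pi_k\,\mathrm{KL}(p_k\|\tilde p_k).
\]
This reduces the estimate to two elementary pieces and entirely sidesteps the mixture-tail integrability issue. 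For the weight term I would use $\mathrm{KL}(\pi\|\tilde\pi)\le\chi^2(\pi\|\tilde\pi)=\sum_k(\pi_k-\tilde\pi_k)^2/\tilde\pi_k$; assumption (A2) forces $\tilde\pi_k\ge\varepsilon$, hence $\mathrm{KL}(\pi\|\tilde\pi)\le\varepsilon^{-1}\sum_k(\pi_k-\tilde\pi_k)^2\le\varepsilon^{-1}K\,\varepsilon_{\mathrm{par}}^2$.

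For each Gaussian--Gaussian term I would invoke the closed form
\[
\mathrm{KL}(p_k\|\tilde p_k)=\log\frac{\tilde\sigma_k}{\sigma_k}+\frac{\sigma_k^2+(\mu_k-\tilde\mu_k)^2}{2\tilde\sigma_k^2}-\tfrac12,
\]
and show it is $O(\varepsilon_{\mathrm{par}}^2)$. The mean contribution is immediately $\le(\mu_k-\tilde\mu_k)^2/(2\sigma_{\min}^2)\le\varepsilon_{\mathrm{par}}^2/(2\sigma_{\min}^2)$. The variance contribution, written as $h(t):=(t^2-1)/2-\log t$ with $t=\sigma_k/\tilde\sigma_k$, satisfies $h(1)=h'(1)=0$ and $h''(t)=1+t^{-2}$; a second-order Taylor expansion about $t=1$ yields $h(t)\le C(t-1)^2$, where $|t-1|=|\sigma_k-\tilde\sigma_k|/\tilde\sigma_k\le\varepsilon_{\mathrm{par}}/\sigma_{\min}$. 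Summing against the weights $\sum_k\pi_k=1$ gives $\sum_k\pi_k\,\mathrm{KL}(p_k\|\tilde p_k)\le C''\varepsilon_{\mathrm{par}}^2$ with $C''$ depending only on $\sigma_{\min},\sigma_{\max}$.

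Combining the two pieces gives $\mathrm{KL}\big(f(\cdot\mid x)\,\|\,\tilde f(\cdot\mid x)\big)\le(\varepsilon^{-1}K+C'')\varepsilon_{\mathrm{par}}^2\le C_{\mathrm{KL}}\,K\,\varepsilon_{\mathrm{par}}^2$, and since no constant depends on $x$, taking the supremum over $\mathcal{X}$ completes the argument. The one step requiring care---the \emph{main obstacle}---is making the quadratic control of $h$ uniform in $k$ and $x$: this is precisely where the hypothesis ``for sufficiently small $\varepsilon_{\mathrm{par}}$'' is used, since I need $t$ to remain in a fixed neighborhood of $1$ (equivalently $\varepsilon_{\mathrm{par}}\le c\,\sigma_{\min}$) so that the Taylor remainder $h''(\xi)=1+\xi^{-2}$ stays bounded by a constant depending only on $\sigma_{\min},\sigma_{\max}$ rather than on $\varepsilon_{\mathrm{par}}$. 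Everything else is routine bookkeeping.
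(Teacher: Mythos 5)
Your proposal is correct and follows essentially the same route as the paper's proof: the identical mixture-level decomposition $\mathrm{KL}(f\|\tilde f)\le \mathrm{KL}(\pi\|\tilde\pi)+\sum_k\pi_k\,\mathrm{KL}(p_k\|\tilde p_k)$ (which the paper cites from Nguyen, and you rederive via data-processing and the chain rule on the joint law of $(k,y)$ — a nice self-contained touch), combined with the closed-form Gaussian KL and a second-order Taylor bound on the components. The only cosmetic differences are that you bound the weight term by $\chi^2(\pi\|\tilde\pi)\le\varepsilon^{-1}\sum_k(\pi_k-\tilde\pi_k)^2$ where the paper Taylor-expands to get $(2\varepsilon)^{-1}\sum_k(\pi_k-\tilde\pi_k)^2$, and that your remark on uniform control of $h''$ is actually not needed, since $t=\sigma_k/\tilde\sigma_k\in[\sigma_{\min}/\sigma_{\max},\,\sigma_{\max}/\sigma_{\min}]$ already holds under (A2) without requiring $\varepsilon_{\mathrm{par}}$ small.
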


\begin{proof}
The proof is given in \Cref{proof:sup-to-kl}.
\end{proof}



\begin{lemma}[ERM concentration / estimation error]
\label{lem:erm}
Let $\mathcal{F}_{n,K}$ be the class of $K$–component Gaussian mixtures with ReLU parameter functions of width $\le n$.
Assume the covering number satisfies
$\log N(\epsilon,\mathcal{F}_{n,K},\|\cdot\|_\infty)\le C(n,K,d)\log(1/\epsilon)$.
Let $\hat f_{n,K}$ denote the empirical risk minimizer under the negative log-likelihood.
Then, for any $\delta\in(0,1)$, with probability at least $1-\delta$,
\[
\sup_{f\in\mathcal{F}_{n,K}}
  |P_N\log f - P\log f|
  \le C_3 \sqrt{\frac{C(n,K,d)+\log(1/\delta)}{N}},
\]
and consequently,
\[
\mathrm{KL}(f_{n,K}\|\hat f_{n,K})
  \le C_3 \sqrt{\frac{C(n,K,d)+\log(1/\delta)}{N}}.
\]
\end{lemma}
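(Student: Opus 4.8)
The plan is to read the first display as a uniform law of large numbers for the log-loss class $\mathcal{G}_{n,K}:=\{(x,y)\mapsto \log f(y\mid x):f\in\mathcal{F}_{n,K}\}$, to prove it by a symmetrization-plus-chaining empirical-process argument calibrated to the supplied covering numbers, and then to convert the resulting two-sided deviation bound into the KL statement using the optimality of the ERM. Throughout I write $P\log f=\mathbb{E}_{(X,Y)\sim P}[\log f(Y\mid X)]$ and $P_N\log f=N^{-1}\sum_{i=1}^N\log f(Y_i\mid X_i)$, so that the first display is literally a uniform deviation bound for $\mathcal{G}_{n,K}$.

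Before any concentration I would establish envelope and tail control for $\mathcal{G}_{n,K}$, which is where (A2) does the heavy lifting. The upper envelope is immediate: $f(y\mid x)\le(\sqrt{2\pi}\,\sigma_{\min})^{-1}$, so $\log f$ is bounded above by a constant depending only on $\sigma_{\min}$. The lower tail is the delicate part, since $\log f(y\mid x)\to-\infty$ as $|y|\to\infty$; using $\sigma_k\in[\sigma_{\min},\sigma_{\max}]$ and $\pi_k\ge\varepsilon$ one obtains a pointwise quadratic minorant $\log f(y\mid x)\ge -c_1 y^2-c_2$. Because $Y\mid X$ is itself a mixture of Gaussians with variances in $[\sigma_{\min},\sigma_{\max}]$ under $f^\ast$, the variable $Y$ is sub-Gaussian, and hence $\log f(Y\mid X)$ has a sub-exponential lower tail uniformly over $\mathcal{F}_{n,K}$; this is what legitimizes the concentration step.

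With tails in hand I would run the two standard stages. \emph{Complexity}: by symmetrization $\mathbb{E}\sup_f|P_N\log f-P\log f|\le 2\,\mathbb{E}\,\mathfrak{R}_N(\mathcal{G}_{n,K})$, and Dudley's entropy integral controls the Rademacher average. To feed in the hypothesis $\log N(\epsilon,\mathcal{F}_{n,K},\|\cdot\|_\infty)\le C(n,K,d)\log(1/\epsilon)$ I first transfer covering numbers from the parameter functions to $\mathcal{G}_{n,K}$: on the non-degenerate range fixed by (A2) the map $(\pi_k,\mu_k,\sigma_k)\mapsto\log\phi$ is Lipschitz, so an $\epsilon$-net of $\mathcal{F}_{n,K}$ in $\|\cdot\|_\infty$ induces an $O(\epsilon)$-net of $\mathcal{G}_{n,K}$. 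The integral $\int_0^{D}\sqrt{C(n,K,d)\log(1/\epsilon)}\,d\epsilon$ converges and is of order $\sqrt{C(n,K,d)}$, delivering $\mathbb{E}\sup_f|P_N\log f-P\log f|\lesssim\sqrt{C(n,K,d)/N}$ (a single-scale net would cost an extra $\log N$ factor, so chaining is what yields the clean rate). \emph{Concentration}: I would upgrade this to a probability-$(1-\delta)$ statement by a bounded-differences/Bernstein inequality for the supremum, adding the $\sqrt{\log(1/\delta)/N}$ contribution and producing the first display with absolute constant $C_3$. For the second display I use only the ERM property: since $\hat f_{n,K}$ maximizes $P_N\log f$, we have $P_N\log\hat f_{n,K}\ge P_N\log f_{n,K}$ for the in-class reference $f_{n,K}$, and adding and subtracting population means gives $P\log f_{n,K}-P\log\hat f_{n,K}\le 2\sup_f|P_N\log f-P\log f|$. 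Recognizing the population log-loss gap as the KL discrepancy relative to the reference density, $P\log f_{n,K}-P\log\hat f_{n,K}=\mathbb{E}_X\,\mathrm{KL}(f_{n,K}(\cdot\mid X)\,\|\,\hat f_{n,K}(\cdot\mid X))$, and substituting the first display yields the claimed bound on $\mathrm{KL}(f_{n,K}\|\hat f_{n,K})$.

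I expect the main obstacle to be precisely the unboundedness of the log-likelihood loss: the off-the-shelf bounded-differences and Rademacher tools assume a bounded (or at least sub-Gaussian) envelope, so the quadratic lower minorant together with the sub-Gaussianity of $Y$ must be combined into genuine sub-exponential control and paired with a Bernstein-type rather than Hoeffding-type concentration inequality. A secondary technical point, also resting on (A2), is the Lipschitz transfer that converts the $\|\cdot\|_\infty$ covering numbers of the parameter networks into covering numbers of the induced log-density class $\mathcal{G}_{n,K}$.
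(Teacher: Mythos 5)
Your proposal follows the same two-step architecture as the paper's own proof: a uniform deviation bound for the log-loss class $\{\log f:f\in\mathcal{F}_{n,K}\}$ derived from the assumed covering-number growth, followed by the ERM optimality inequality ($P_N\log\hat f_{n,K}\ge P_N\log f_{n,K}$, add and subtract population means) to convert the deviation into the KL-type bound. The difference is one of rigor, in your favor: the paper compresses the first step into a citation of empirical-process theory \citep{van1996weak} and asserts that the log-loss class is \emph{uniformly bounded} ``by (A2),'' which is not actually true --- (A2) gives the upper envelope $f(y\mid x)\le(\sqrt{2\pi}\,\sigma_{\min})^{-1}$ but the lower tail $\log f(y\mid x)\ge -c_1y^2-c_2$ is unbounded in $y$, so Hoeffding-type tools do not apply unless $Y$ is compactly supported. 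Your quadratic-minorant-plus-sub-Gaussian-$Y$ argument, paired with Bernstein-type concentration and a Dudley chaining bound (with the Lipschitz transfer of covering numbers from parameter functions to the log-density class, again via (A2)), is exactly the machinery needed to make the paper's one-line citation sound. Your factor-of-2 in the ERM step versus the paper's factor-of-1 is immaterial, being absorbed into $C_3$.

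One caveat, which you inherit from the paper rather than introduce: the identity $P\log f_{n,K}-P\log\hat f_{n,K}=\mathbb{E}_X\,\mathrm{KL}\big(f_{n,K}(\cdot\mid X)\,\|\,\hat f_{n,K}(\cdot\mid X)\big)$ is false as written, because $P$ is the true law, so the inner expectation over $Y$ is under $f^*(\cdot\mid X)$, not $f_{n,K}(\cdot\mid X)$; the log-loss gap actually equals $\mathbb{E}_X\big[\mathrm{KL}(f^*\|\hat f_{n,K})-\mathrm{KL}(f^*\|f_{n,K})\big]$. The paper's proof makes the identical leap (``implies the stated KL deviation''), so your write-up matches the intended argument, but to obtain the lemma's literal conclusion $\mathrm{KL}(f_{n,K}\|\hat f_{n,K})\lesssim\sqrt{[C(n,K,d)+\log(1/\delta)]/N}$ both you and the paper would need an additional step relating the two divergences, e.g.\ exploiting that $f_{n,K}$ is itself KL-close to $f^*$ as in the decomposition used to prove \Cref{thm:mdn}.
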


\begin{proof}
See \Cref{proof:erm}

\end{proof}

\begin{lemma}[PAC-Bayes inequality]
\label{lem:pac-bayes}
Let $\pi$ be a prior on network weights and $q$ any posterior distribution.
Define $p_q(y\mid x)=\mathbb{E}_{w\sim q}[p(y\mid x,w)]$.
Then for any $\delta\in(0,1)$, with probability at least $1-\delta$ over $D=\{(X_i,Y_i)\}_{i=1}^N$,
\[
\mathbb{E}_X \mathrm{KL}\!\big(f^*(\cdot\mid X)\,\|\,p_q(\cdot\mid X)\big)
 \le \frac{1}{N}\sum_{i=1}^N \mathbb{E}_{w\sim q}[-\log p(Y_i\mid X_i,w)]
   + \frac{\mathrm{KL}(q\|\pi)+\log(1/\delta)}{N}.
\]
\end{lemma}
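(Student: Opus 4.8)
The plan is to prove this through the standard PAC-Bayesian route built on the Donsker--Varadhan (change-of-measure) variational formula for the KL divergence, combined with an exponential-moment bound on the log-likelihood ratio and a single application of Markov's inequality. Throughout I would write $\ell_i(w) = -\log p(Y_i\mid X_i, w)$ and let $r_i(w) = p(Y_i\mid X_i,w)/f^*(Y_i\mid X_i)$ denote the per-sample likelihood ratio. The central elementary fact I would exploit is the \emph{mean-one} property $\mathbb{E}_{Y\sim f^*(\cdot\mid x)}[p(Y\mid x,w)/f^*(Y\mid x)] = \int p(y\mid x,w)\,dy = 1$, which holds for every fixed weight $w$ and input $x$ simply because $p(\cdot\mid x,w)$ is a density.

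First I would invoke the Donsker--Varadhan inequality: for any prior $\pi$, any data-dependent posterior $q$, and any measurable $h$, one has $\mathbb{E}_{w\sim q}[h(w)] \le \mathrm{KL}(q\|\pi) + \log\mathbb{E}_{w\sim\pi}[e^{h(w)}]$. Applying this with $h(w) = \sum_{i=1}^N \log r_i(w)$ turns the exponential moment into the prior-averaged likelihood ratio $\mathbb{E}_{w\sim\pi}\prod_{i=1}^N r_i(w)$. By Fubini and the i.i.d. structure its data expectation factorizes, and the mean-one property gives $\mathbb{E}_D\,\mathbb{E}_{w\sim\pi}\prod_i r_i(w) \le 1$. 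Markov's inequality then yields, with probability at least $1-\delta$, the bound $\log\mathbb{E}_{w\sim\pi}\prod_i r_i(w) \le \log(1/\delta)$, and chaining the two steps produces the high-probability inequality $\mathbb{E}_{w\sim q}[\sum_i\log r_i(w)] \le \mathrm{KL}(q\|\pi)+\log(1/\delta)$.

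The remaining steps convert this into the stated form, using Jensen's inequality twice. On the right-hand side, convexity of $-\log$ gives $-\log p_q(Y_i\mid X_i) = -\log\mathbb{E}_{w\sim q}[p(Y_i\mid X_i,w)] \le \mathbb{E}_{w\sim q}[\ell_i(w)]$, which is exactly why the looser Gibbs negative log-likelihood $\frac1N\sum_i\mathbb{E}_{w\sim q}[\ell_i(w)]$ may stand on the right. To produce the population quantity on the left, I would write $\mathbb{E}_X\mathrm{KL}(f^*(\cdot\mid X)\|p_q(\cdot\mid X)) = \mathbb{E}_{(X,Y)}[\log(f^*(Y\mid X)/p_q(Y\mid X))]$ and control the deviation of this population expectation from its empirical counterpart with the same exponential-moment device, now applied to the $\rho$-scaled ratio $r_i(w)^\rho$, whose conditional mean is $\le 1$ by concavity of $t\mapsto t^\rho$. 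This is precisely where the Rényi divergence of the preliminaries enters, through $\mathbb{E}_{Y\sim f^*}[r^\rho] = e^{-\rho D_{1-\rho}(f^*\|p)}$. Collecting terms and dividing by $N$ yields the claimed inequality.

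The main obstacle is obtaining the \emph{generalization} direction --- population KL on the left, empirical risk on the right --- at the fast $1/N$ rate, rather than merely bounding the empirical over-fit. The naive mean-one argument controls the empirical log-likelihood ratio, so routing the population KL into the high-probability statement requires the $\rho$-power (Rényi) version together with the non-degeneracy and boundedness of Assumption (A2), which keeps $\log r_i(w)$ sub-exponential and permits passing to the limit $\rho\to1$ (or absorbing a constant) to recover KL exactly. A related subtlety is the differential-entropy term $\mathbb{E}_{(X,Y)}[\log f^*(Y\mid X)]$ that the likelihood ratio introduces: it cancels between the population and empirical cross-entropies up to a concentration term, and making this cancellation precise --- so that only the stated empirical negative log-likelihood and the complexity term survive --- is the delicate bookkeeping in the argument.
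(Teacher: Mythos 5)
Your first stage is correct, and it is exactly the engine inside the results the paper cites rather than reproves: Donsker--Varadhan applied to $h(w)=\sum_{i=1}^N\log r_i(w)$, the mean-one property $\mathbb{E}_{Y\sim f^*(\cdot\mid x)}[r(w)]=1$, Fubini over the i.i.d.\ sample, and Markov's inequality give, with probability $1-\delta$, $\mathbb{E}_{w\sim q}\bigl[\sum_i\log r_i(w)\bigr]\le \mathrm{KL}(q\|\pi)+\log(1/\delta)$; and your Jensen step $-\log p_q(Y_i\mid X_i)\le \mathbb{E}_{w\sim q}[\ell_i(w)]$ correctly handles the passage between the Gibbs risk and the mixture predictive $p_q$ (a step the paper's proof leaves entirely implicit). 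But note what this controls: the \emph{empirical} log-likelihood ratio, i.e.\ how much $q$ can overfit the sample relative to $f^*$. The population KL on the left of the lemma is untouched at this point. The paper takes a much shorter and different route here: it invokes the bounded-log-loss PAC--Bayes theorem of McAllester/Catoni as a black box, which already has the \emph{population} Gibbs risk on the left, then writes $\mathbb{E}_{X,Y}\,\ell(Y,X,w)=\mathbb{E}_X\,\mathrm{KL}\bigl(f^*(\cdot\mid X)\,\|\,p(\cdot\mid X,w)\bigr)+H(f^*)$ and drops the entropy term by fiat --- i.e.\ it assumes away precisely the difficulty you (honestly) flag.

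The genuine gap is in your proposed completion, and it is concrete: the $\rho$-power (R\'enyi) device is the right family of tools, but the limit you sketch goes the wrong way. With your (correct) identity $\mathbb{E}_{Y\sim f^*}[r^\rho]=e^{-\rho D_{1-\rho}(f^*\|p)}$, the divergence entering the exponential moment has R\'enyi order $\alpha=1-\rho$. Letting $\rho\to1$ sends the order to $0$, and $D_0$ is degenerate (for mutually absolutely continuous Gaussian mixtures it is $0$), so no KL bound is recovered there; KL is the order-one limit, i.e.\ $\rho\to0$, exactly where the complexity term this argument produces, $[\mathrm{KL}(q\|\pi)+\log(1/\delta)]/(\rho N)$, blows up. What the device actually yields at a fixed temperature such as $\rho=\tfrac12$ is a Hellinger/R\'enyi-$\tfrac12$ bound; upgrading that to KL requires bounded-log-ratio comparison lemmas of Wong--Shen / Ghosal--van der Vaart type, which introduce multiplicative constants absent from the stated bound --- and under (A2) the log-ratio $\log(f^*/p_w)$ grows quadratically in $y$ whenever component variances differ, so it is only sub-exponential, not bounded, which is why the ``absorb a constant'' step cannot simply be waved through. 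Two further loose ends: the exponential moment inserts the divergence at the sampled $X_i$'s (or the joint R\'enyi divergence, which by Jensen is \emph{smaller} than $\mathbb{E}_X$ of the conditional one), so even the population $X$-average needs an extra argument; and the entropy bookkeeping never fully cancels --- the stated bound with the raw empirical NLL on the right is only correct after discarding $H(f^*)$, which can be negative. In short, your attempt is more rigorous in ambition than the paper's two-line citation, but the decisive step --- converting empirical control into population KL at the additive $1/N$ rate with constant one --- is not carried out, and the $\rho\to1$ mechanism you propose cannot carry it.
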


\begin{proof}

The proof is given in \Cref{proof:pac-bayes}.

\end{proof}

\begin{theorem}\label{thm:mdn}
(MDN KL convergence for exact mixture) Under (A1)–(A4), let $f^*$ be the true conditional density and $\hat f_{n,K}$ the ERM over $\mathcal{F}_{n,K}$ with $K\ge M$.
Then there exist constants $C_2,C_3>0$ such that, with probability at least $1-\delta$,
\[
\mathrm{KL}\!\big(f^*\|\hat f_{n,K}\big)
 \le C_2 K n^{-2s/d}
   + C_3 \sqrt{\frac{C(n,K,d)+\log(1/\delta)}{N}}.
\]
\end{theorem}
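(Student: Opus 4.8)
The plan is to split the target $\mathrm{KL}(f^*\|\hat f_{n,K})$ into an \emph{approximation} term and an \emph{estimation} term and to bound each with the auxiliary lemmas already established. Because KL is not a metric, I would not invoke a triangle inequality; instead I would introduce a fixed network-parametrized mixture $f_{n,K}\in\mathcal{F}_{n,K}$ that approximates $f^*$, and exploit the exact additive identity
\[
\mathrm{KL}(f^*\|\hat f_{n,K}) = \mathrm{KL}(f^*\|f_{n,K}) + \big(P\log f_{n,K} - P\log\hat f_{n,K}\big),
\]
where $P$ denotes the population joint law of $(X,Y)$ with $Y\mid X\sim f^*$ and $Pg := \mathbb{E}_P[g(X,Y)]$. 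The first summand is the approximation error; the second is the estimation error, which will be nonpositive on the empirical scale by the ERM property and hence controllable by uniform concentration.

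For the approximation term, since $K\ge M$, \Cref{lem:mixture-approx} lets me regard $f^*$ as a genuine $K$-component mixture whose weight, mean, and variance functions are $s$-H\"older by (A1). Applying \Cref{lem:relu-approx} coordinatewise, each such parameter function is matched by a width-$n$ ReLU network to sup-norm accuracy $C_{\mathrm{app}}\,n^{-s/d}$ (the simplex constraint on the weights is handled by approximating logits and passing through the Lipschitz softmax, or by renormalizing, which only inflates the constant). Collecting these into a single $f_{n,K}\in\mathcal{F}_{n,K}$ gives $\varepsilon_{\mathrm{par}}\le C_{\mathrm{app}}\,n^{-s/d}$, and \Cref{lem:sup-to-kl} then yields $\sup_{x}\mathrm{KL}(f^*(\cdot\mid x)\|f_{n,K}(\cdot\mid x))\le C_{\mathrm{KL}}K\varepsilon_{\mathrm{par}}^2$. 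The squaring in \Cref{lem:sup-to-kl}---reflecting the locally quadratic (Fisher) behavior of KL near the truth---is exactly what converts the $n^{-s/d}$ parameter rate into the stated $n^{-2s/d}$, so the approximation term is at most $C_2 K n^{-2s/d}$ with $C_2 = C_{\mathrm{KL}}C_{\mathrm{app}}^2$.

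For the estimation term I would write $P\log f_{n,K}-P\log\hat f_{n,K} = (P-P_N)(\log f_{n,K}-\log\hat f_{n,K}) + (P_N\log f_{n,K}-P_N\log\hat f_{n,K})$. The ERM definition (A4) forces $P_N\log\hat f_{n,K}\ge P_N\log f_{n,K}$, so the empirical bracket is $\le 0$, and the remaining centered term is dominated by $2\sup_{f\in\mathcal{F}_{n,K}}|P_N\log f-P\log f|$, which \Cref{lem:erm} bounds by $C_3\sqrt{(C(n,K,d)+\log(1/\delta))/N}$ on an event of probability at least $1-\delta$. Adding the two pieces and absorbing the factor $2$ into $C_3$ gives the claim. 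The main obstacle I anticipate is bookkeeping rather than conceptual: ensuring the same approximant $f_{n,K}$ simultaneously satisfies the smallness hypothesis of \Cref{lem:sup-to-kl} (valid once $n$ is large enough that $C_{\mathrm{app}}n^{-s/d}$ falls below the lemma's threshold; for small $n$ the bound holds after enlarging $C_2$) and lies in the covering-number-controlled class of \Cref{lem:erm}, together with verifying that the boundedness in (A2) keeps $\log f$ and its perturbations uniformly integrable so that the concentration bound genuinely applies to the unbounded (but sub-exponentially tailed) log-likelihood loss.
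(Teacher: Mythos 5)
Your proof is correct and follows the same skeleton as the paper's --- approximation via \Cref{lem:mixture-approx}, \Cref{lem:relu-approx}, and \Cref{lem:sup-to-kl}, then estimation via \Cref{lem:erm} --- but your decomposition is a genuine (and welcome) repair of the paper's. The paper writes $\mathrm{KL}(f^*\|\hat f_{n,K}) = \mathrm{KL}(f^*\|f_K) + \mathrm{KL}(f_K\|f_{n,K}) + \mathrm{KL}(f_{n,K}\|\hat f_{n,K})$, which is not a valid identity: KL does not telescope across a chain of densities with different reference measures, and the third term as written is an expectation under $f_{n,K}$ rather than under $f^*$. Your identity $\mathrm{KL}(f^*\|\hat f_{n,K}) = \mathrm{KL}(f^*\|f_{n,K}) + \bigl(P\log f_{n,K} - P\log\hat f_{n,K}\bigr)$ is exact precisely because both divergences share the reference measure $f^*$, and it is in fact what the paper's own proof of \Cref{lem:erm} implicitly manipulates (there the quantity labeled $\mathrm{KL}(f_{n,K}\|\hat f_{n,K})$ is really the excess risk $P\log f_{n,K}-P\log\hat f_{n,K}$). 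Your estimation step is also slightly more careful than the paper's: splitting off the empirical bracket, using the ERM property (A4) to make it nonpositive, and paying a factor $2\sup_f|P_N\log f - P\log f|$ (the paper drops the factor of $2$, a harmless slip absorbed into $C_3$). Finally, your attention to the simplex constraint via the softmax/renormalization, the smallness threshold in \Cref{lem:sup-to-kl} for small $n$, and the boundedness of $\log f$ guaranteed by (A2) addresses bookkeeping the paper leaves implicit. Both routes yield the identical bound $C_2 K n^{-2s/d} + C_3\sqrt{(C(n,K,d)+\log(1/\delta))/N}$; yours is the rigorous form of the paper's intended argument.
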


\begin{proof}
Decompose
\[
\mathrm{KL}(f^*\|\hat f_{n,K})
  = \mathrm{KL}(f^*\|f_K)
   +\mathrm{KL}(f_K\|f_{n,K})
   +\mathrm{KL}(f_{n,K}\|\hat f_{n,K}).
\]
The first term vanishes by \Cref{lem:mixture-approx}.
\Cref{lem:relu-approx} combined with \Cref{lem:sup-to-kl} gives $\mathrm{KL}(f_K\|f_{n,K})\le C_2 K n^{-2s/d}$.
\Cref{lem:erm} yields the empirical estimation bound\\
$\mathrm{KL}(f_{n,K}\|\hat f_{n,K})\le C_3\sqrt{[C(n,K,d)+\log(1/\delta)]/N}$.
Summing completes the proof.
\end{proof}

\begin{theorem}\label{thm:bnn}
(BNN / PAC-Bayes posterior predictive bound) Under (A1)–(A5), let $\pi$ be a prior on network weights and $\mathcal{Q}$ a variational family satisfying (A5).
Then there exists $q^*\in\mathcal{Q}$ such that, for any $\delta\in(0,1)$,
\[
\mathbb{E}_X\mathrm{KL}\!\big(f^*(\cdot\mid X)\,\|\,p_\pi(\cdot\mid X,D)\big)
 \le C_2 K n^{-2s/d}
   + \frac{\mathrm{KL}(q^*\|\pi)+\log(1/\delta)}{N}.
\]
\end{theorem}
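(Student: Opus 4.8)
The plan is to combine the deterministic approximation machinery (Lemmas~\ref{lem:mixture-approx}--\ref{lem:sup-to-kl}) with the PAC-Bayes inequality (Lemma~\ref{lem:pac-bayes}), evaluated at a carefully chosen variational posterior. First I would construct a single set of network weights $w^\ast$ realizing a good approximating mixture. By Lemma~\ref{lem:mixture-approx} the truth is an exact $K$-component Gaussian mixture ($f^\ast=f_K$), and by Lemma~\ref{lem:relu-approx} each of the $s$-Hölder parameter functions $\pi_m,\mu_m,\sigma_m$ is matched by a width-$n$ ReLU network up to sup-norm error $C_{\mathrm{app}}\,n^{-s/d}$. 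Stacking these networks defines weights $w^\ast$ whose induced conditional density $p(\cdot\mid x,w^\ast)$ has parameter perturbation $\varepsilon_{\mathrm{par}}\le C_{\mathrm{app}}\,n^{-s/d}$, so Lemma~\ref{lem:sup-to-kl} yields $\sup_{x}\mathrm{KL}\big(f^\ast(\cdot\mid x)\,\|\,p(\cdot\mid x,w^\ast)\big)\le C_2\,K\,n^{-2s/d}$ after absorbing $C_{\mathrm{KL}}C_{\mathrm{app}}^2$ into $C_2$.

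Next I would invoke assumption (A5) to select $q^\ast\in\mathcal{Q}$ concentrated near $w^\ast$, and identify the posterior predictive $p_\pi(\cdot\mid X,D)$ in the statement with the variational mixture $p_{q^\ast}(\cdot\mid X)=\mathbb{E}_{w\sim q^\ast}[p(\cdot\mid X,w)]$. Applying Lemma~\ref{lem:pac-bayes} with $q=q^\ast$ gives, with probability at least $1-\delta$,
\[
\mathbb{E}_X\,\mathrm{KL}(f^\ast\,\|\,p_{q^\ast})
 \le \frac{1}{N}\sum_{i=1}^N \mathbb{E}_{w\sim q^\ast}[-\log p(Y_i\mid X_i,w)]
   + \frac{\mathrm{KL}(q^\ast\|\pi)+\log(1/\delta)}{N}.
\]
The complexity term already matches the second summand of the target bound, so it only remains to show that the empirical Gibbs risk is controlled by the approximation error $C_2 K n^{-2s/d}$.

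For that final step I would read the PAC-Bayes bound in its excess form, i.e.\ with the log-likelihood-ratio loss $\ell_w(x,y)=\log\{f^\ast(y\mid x)/p(y\mid x,w)\}$, so that the empirical term is $\frac1N\sum_i \mathbb{E}_{w\sim q^\ast}\ell_w(X_i,Y_i)$; concretely, $-\log p(Y_i\mid X_i,w)$ splits into the fixed entropy baseline $-\log f^\ast(Y_i\mid X_i)$ (the same baseline that distinguishes the left-hand KL from a cross-entropy) plus this ratio, and I would absorb the former. The data-expectation of the ratio term is exactly $\mathbb{E}_X\,\mathbb{E}_{w\sim q^\ast}\mathrm{KL}\big(f^\ast(\cdot\mid X)\,\|\,p(\cdot\mid X,w)\big)$. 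Because $q^\ast$ concentrates near $w^\ast$, I would push this Gibbs KL through the Step~1 sup-norm bound (continuity of the mixture log-density in its parameters under (A2), exactly as in Lemma~\ref{lem:sup-to-kl}) to obtain $\le C_2 K n^{-2s/d}+o(1)$, the $o(1)$ vanishing as the spread of $q^\ast$ shrinks; combining with the previous display gives the claim.

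The hard part will be making this last step fully rigorous. One must tune the variance of $q^\ast$ small enough that the Gibbs risk stays within $O(n^{-2s/d})$ of the plug-in risk at $w^\ast$ while simultaneously keeping $\mathrm{KL}(q^\ast\|\pi)$ from blowing up, and one must justify replacing the empirical average by its expectation (a standard concentration argument, or more cleanly folding the deviation into the already-present $\log(1/\delta)/N$ term). The genuine technical obstacle is the unboundedness of the log-loss: the Gaussian tails of $p(\cdot\mid x,w)$ make $-\log p$ unbounded in $y$, so the PAC-Bayes change-of-measure is not immediate. Here assumption (A2) (uniform $\sigma_{\min},\sigma_{\max}$ and bounded mixing weights) is precisely what renders the log-density Lipschitz in the parameters and sub-Gaussian in $y$, and that is the structure I would lean on to close the gap.
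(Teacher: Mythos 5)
Your proposal follows essentially the same route as the paper's proof: invoke (A5) to choose $q^*$ concentrated near weights realizing the approximating network $f_{n,K}$, apply \Cref{lem:pac-bayes} with $q=q^*$, collapse the Gibbs risk onto $-\log f_{n,K}$, and bound the resulting approximation term by $C_2 K n^{-2s/d}$ via \Cref{lem:mixture-approx}--\Cref{lem:sup-to-kl}. You are in fact more careful than the paper at precisely its weakest points --- the entropy-baseline cancellation (the paper's \Cref{lem:pac-bayes} simply ``omits'' $H(f^*)$, which is suspect for differential entropy), the empirical-to-population step for the excess-risk term, and the unboundedness of the Gaussian log-loss --- all of which the paper's proof glosses over.
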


\begin{proof}
By (A5), choose $q^*\in\mathcal{Q}$ concentrated around weights realizing the approximating network $f_{n,K}$.
Applying \Cref{lem:pac-bayes} with $q=q^*$ gives
\[
\mathbb{E}_X\mathrm{KL}\!\big(f^*(\cdot\mid X)\,\|\,p_{q^*}(\cdot\mid X)\big)
 \le \frac{1}{N}\!\sum_{i=1}^N \mathbb{E}_{w\sim q^*}[-\log p(Y_i\mid X_i,w)]
   +\frac{\mathrm{KL}(q^*\|\pi)+\log(1/\delta)}{N}.
\]
Because $q^*$ is concentrated in a small neighborhood of the weights generating $f_{n,K}$,
$\mathbb{E}_{w\sim q^*}[-\log p(Y_i\mid X_i,w)] = -\log f_{n,K}(Y_i\mid X_i) + o(1)$.
Taking expectations under $f^*$ and invoking Lemmas 2–3 yields the approximation bound
$\mathbb{E}_X\mathrm{KL}(f^*(\cdot\mid X)\|f_{n,K}(\cdot\mid X)) \le C_2 K n^{-2s/d}$.
Combining terms gives the desired result.
\end{proof}

Finally, we conclude this section with the following remarks.

(i) \Cref{thm:mdn} and \Cref{thm:bnn} together clarify the distinct statistical behaviors of MDNs and BNNs. The MDN bound combines an approximation term of order $K n^{-2s/d}$ with an estimation term of order $\sqrt{C(n,K,d)/N}$, leading to consistency and fast convergence under standard smoothness and boundedness assumptions. In contrast, the BNN bound inherits an additional $\mathrm{KL}(q^\ast\|\pi)/N$ term, reflecting the impact of prior mismatch and variational approximation on generalization.
  
 (ii) These results formally explain the empirical findings: MDNs tend to recover multimodal or heteroscedastic conditional densities more accurately, while BNNs trained via variational inference may exhibit over-smoothed or inflated uncertainty estimates when the variational family is restrictive. The theoretical gap between the two methods thus directly corresponds to the practical performance gap observed in simulation.
  
(iii) The dependence on $(n,K,d)$ shows that the expressive capacity of the MDN architecture controls approximation bias, whereas sample size $N$ governs estimation error. For BNNs, even large models cannot eliminate the bias introduced by limited variational flexibility or inappropriate priors, emphasizing the importance of posterior expressivity in Bayesian deep learning.
  
(iv) The $K n^{-2s/d}$ term highlights that approximation errors decay quadratically with respect to the network’s functional approximation accuracy, confirming that smoother target conditionals (larger $s$) or wider networks (larger $n$) yield faster convergence.
  
(v) Overall, the theoretical analysis establishes that MDNs offer a direct and statistically efficient route to conditional density estimation, while BNNs trade statistical efficiency for a probabilistic interpretability that depends critically on the quality of the variational posterior. This delineation provides a principled explanation for the simulation outcomes and guides model choice in practice.


\section{Simulations}\label{sim}

This section reports a controlled simulation comparing a Bayesian Neural Network (BNN) trained via variational inference (VI) and a Mixture Density Network (MDN). The implementation follows the accompanying PyTorch/NumPy script, using 3000 training epochs for both models, a learning rate of $10^{-3}$, and the Adam optimizer. All experiments were conducted with fixed random seeds for reproducibility.

\subsection*{Data-generating processes}

For each case, $n=800$ data points are generated with additive Gaussian noise $\varepsilon_i \sim \mathcal{N}(0,0.1^2)$, $i=1,\ldots,n$. The four cases correspond to distinct nonlinear relationships between $X$ and $Y$:

\begin{itemize}
\item \textbf{Case A (Cubic):} $f(x)=x^3$, representing a smooth monotone nonlinear trend.

\item \textbf{Case B (Piecewise):}
\[
f(x)=
\begin{cases}
x^2, & x<0,\\
-1.5x+0.3, & x\ge 0,
\end{cases}
\]
exhibiting a sharp change in slope at $x=0$.

\item \textbf{Case C (Bimodal):}
\[
Y\mid X=x \sim 0.5\,\mathcal{N}(x+1,0.1^2) + 0.5\,\mathcal{N}(-x-1,0.1^2),
\]
generating two overlapping Gaussian modes and a multi-modal conditional distribution.

\item \textbf{Case D (Sinusoidal):} $f(x)=\sin(3x)+0.3\sin(9x)$, representing a highly oscillatory function with multiple local extrema.
\end{itemize}

Inputs $X_i$ are sampled uniformly from $\mathrm{Unif}[-3,3]$, and data are split into 80\% training and 20\% testing. A dense grid of 500 equally spaced points in $[-3,3]$ is used for evaluation and visualization.

\subsection*{Models and training}

\paragraph{Bayesian Neural Network (VI).}
The BNN consists of two stochastic Bayesian linear layers with $\tanh$ activation:
\[
1 \rightarrow 50 \rightarrow 1,
\]
where all weights and biases are modeled with Gaussian variational posteriors $q(\theta)=\mathcal{N}(\mu,\sigma^2)$. The network is trained by minimizing the negative evidence lower bound (ELBO):
\[
\mathcal{L}_{\text{VI}}(\theta)
= -\mathbb{E}_{q(\theta)}[\log p(Y\mid X,\theta)] + \frac{1}{n}\mathrm{KL}\big(q(\theta)\,\Vert\,p(\theta)\big),
\]
with a standard normal prior $p(\theta)=\mathcal{N}(0,1)$. During inference, $T=200$ Monte Carlo forward passes are performed to approximate the predictive mean and uncertainty:
\[
\hat{\mu}_{\text{BNN}}(x)=\frac{1}{T}\sum_{t=1}^T f_t(x), \qquad
\hat{\sigma}_{\text{BNN}}(x)=\sqrt{\frac{1}{T}\sum_{t=1}^T \big(f_t(x)-\hat{\mu}_{\text{BNN}}(x)\big)^2}.
\]

\paragraph{Mixture Density Network (MDN).}
The MDN parameterizes $p(y\mid x)$ as a Gaussian mixture with $K=5$ components. The architecture includes one hidden layer of width 50 with $\tanh$ activation, followed by output heads for mixture weights, component means, and standard deviations. The predictive mean and variance are given by:
\[
\mu_{\text{MDN}}(x)=\sum_{k=1}^K \pi_k(x)\mu_k(x),\quad
\operatorname{Var}_{\text{MDN}}(x)=\sum_{k=1}^K \pi_k(x)\big(\sigma_k^2(x)+\mu_k^2(x)\big)-\mu_{\text{MDN}}^2(x).
\]

\begin{figure}[!ht]
\centering
\includegraphics[width=1\textwidth]{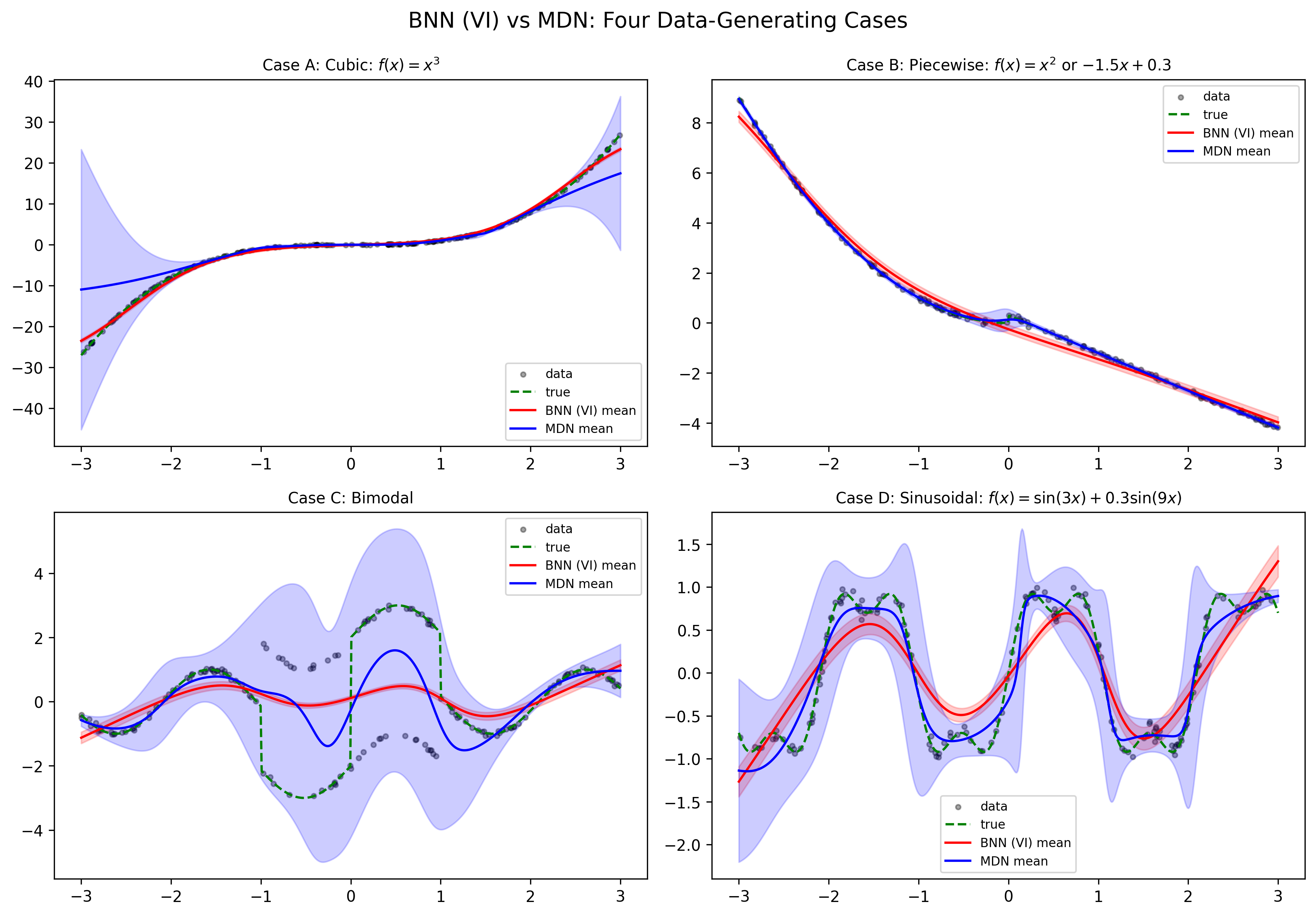}
\caption{\textbf{Predictive comparison between BNN (VI) and MDN.}
Each panel corresponds to one data-generating function. The dashed green curve shows the ground truth $f(x)$, while the shaded bands represent $\pm2$ standard deviation predictive intervals.
The BNN exhibits wide, homoscedastic uncertainty, whereas the MDN adapts its predictive variance to data complexity and multimodality.}
\label{fig:bnn-mdn}
\end{figure}

\subsection*{Predictive performance}

Table~\ref{tab:nll} reports the average test-set negative log-likelihoods (NLL) per sample for each simulation setting. 
For the BNN, $T=200$ Monte Carlo samples are used for estimating the predictive likelihood. 
For the MDN, the exact mixture likelihood is computed analytically from the learned component parameters.

\begin{table}[!ht]
\centering
\caption{Per-sample test-set negative log-likelihoods (NLL) for BNN (VI) and MDN across four simulation settings. Lower values correspond to better-calibrated predictive densities. Negative values can occur when the model assigns probability densities greater than one, which is valid in continuous settings.}
\label{tab:nll}
\begin{tabular}{@{}lcc@{}}
\toprule
\textbf{Case} & \textbf{BNN (VI)} & \textbf{MDN} \\ \midrule
A (Cubic) & 1.0817 & \textbf{-0.1946} \\
B (Piecewise) & 3.5625 & \textbf{1.2553} \\
C (Bimodal) & 37.3510 & \textbf{0.5883} \\
D (Sinusoidal) & 20.4724 & \textbf{-0.1514} \\ \bottomrule
\end{tabular}
\end{table}

The results highlight the expressive advantage of the MDN in predictive calibration and density estimation:

\begin{itemize}
    \item \textbf{Cubic and Piecewise:} Even in smooth regimes, the BNN (VI) tends to produce diffuse, underconfident posteriors, leading to higher NLL. The MDN, benefiting from its mixture representation, produces sharper and better-calibrated likelihoods.
    \item \textbf{Bimodal:} The BNN’s unimodal Gaussian output cannot represent multimodal targets, resulting in a dramatic deterioration in likelihood. The MDN accurately models both modes, achieving a much lower NLL.
    \item \textbf{Sinusoidal:} The BNN smooths over high-frequency oscillations and inflates predictive variance, whereas the MDN flexibly adapts mixture components to capture the nonlinear periodicity.
\end{itemize}

Overall, the MDN provides more expressive and better-calibrated predictive distributions across all four scenarios.
While the variational BNN captures epistemic uncertainty through posterior sampling, its Gaussian observation model limits its ability to represent heteroscedasticity and multimodality.
In contrast, the MDN directly parameterizes complex conditional densities, achieving both flexibility and superior likelihood-based performance.


\section{Real Data Analysis}\label{real_dat_analysis}

We conducted a real-world evaluation using the \textbf{RSNA Pediatric Bone Age Challenge 2017} dataset, which consists of pediatric hand radiographs accompanied by expert-provided bone age annotations. The predictive task is to estimate bone age (in months) from radiographic images, a clinically important diagnostic measure in pediatric endocrinology.  

This dataset embodies challenges typical of medical imaging: annotation subjectivity, inherent image noise, and complex non-linear relationships between image features and bone age. These characteristics make it a suitable benchmark for assessing uncertainty-aware models such as Bayesian Neural Networks (BNNs) and Mixture Density Networks (MDNs). In this analysis, we utilized the \textbf{full dataset}, partitioned into training and validation subsets for model development and evaluation.  

\subsection{Preprocessing and Normalization}
To standardize inputs and improve model stability, the following preprocessing steps were applied:
\begin{itemize}
    \item \textbf{Image resizing:} All radiographs were resized to $128 \times 128$ pixels.  
    \item \textbf{Pixel normalization:} Images were normalized using ImageNet statistics (mean = [0.485, 0.456, 0.406], std = [0.229, 0.224, 0.225]).  
    \item \textbf{Label normalization:} Bone age labels (months) were standardized to zero mean and unit variance, using statistics computed only from the training set. This ensured stable optimization in the regression setting.  
\end{itemize}

\subsection{Model Architectures and Training}
Both models employed a \textbf{CNN backbone} with four convolutional layers (each followed by ReLU and max pooling) to extract image features. The resulting feature vector was then passed to model-specific prediction heads.  

\textbf{BNN:} The features were fed into two Bayesian linear layers, where weights and biases followed Gaussian distributions. Predictions were distributions rather than point estimates. Training minimized the \textbf{Evidence Lower Bound (ELBO)} with a KL-weight of 0.001 to balance likelihood and prior regularization.  

\textbf{MDN:} Features were passed into a fully connected layer branching into three heads, predicting mixture weights ($\pi$), means ($\mu$), and standard deviations ($\sigma$) of \textbf{3 Gaussian components} ($N=3$). Training maximized the log-likelihood of the data under the predicted mixture distribution.  

Both models were trained for \textbf{60 epochs} using Adam (learning rate = $1 \times 10^{-3}$).  

\subsection{Results}
Performance was evaluated on the held-out validation set. For each model, we report the mean prediction and standard deviation of predictive uncertainty.  
\begin{itemize}
    \item \textbf{BNN:} Predictions were obtained by repeatedly sampling network weights from their learned posterior.  
    \item \textbf{MDN:} Predictions were sampled from the learned Gaussian mixture distribution.  
\end{itemize}

Figure~\ref{fig:predictive_distributions_real_data} presents predicted vs. true bone ages, with error bars denoting $\pm 1$ standard deviation. The dashed diagonal represents perfect prediction.  

\begin{figure}[h!]
\centering
\includegraphics[width=1.0\textwidth]{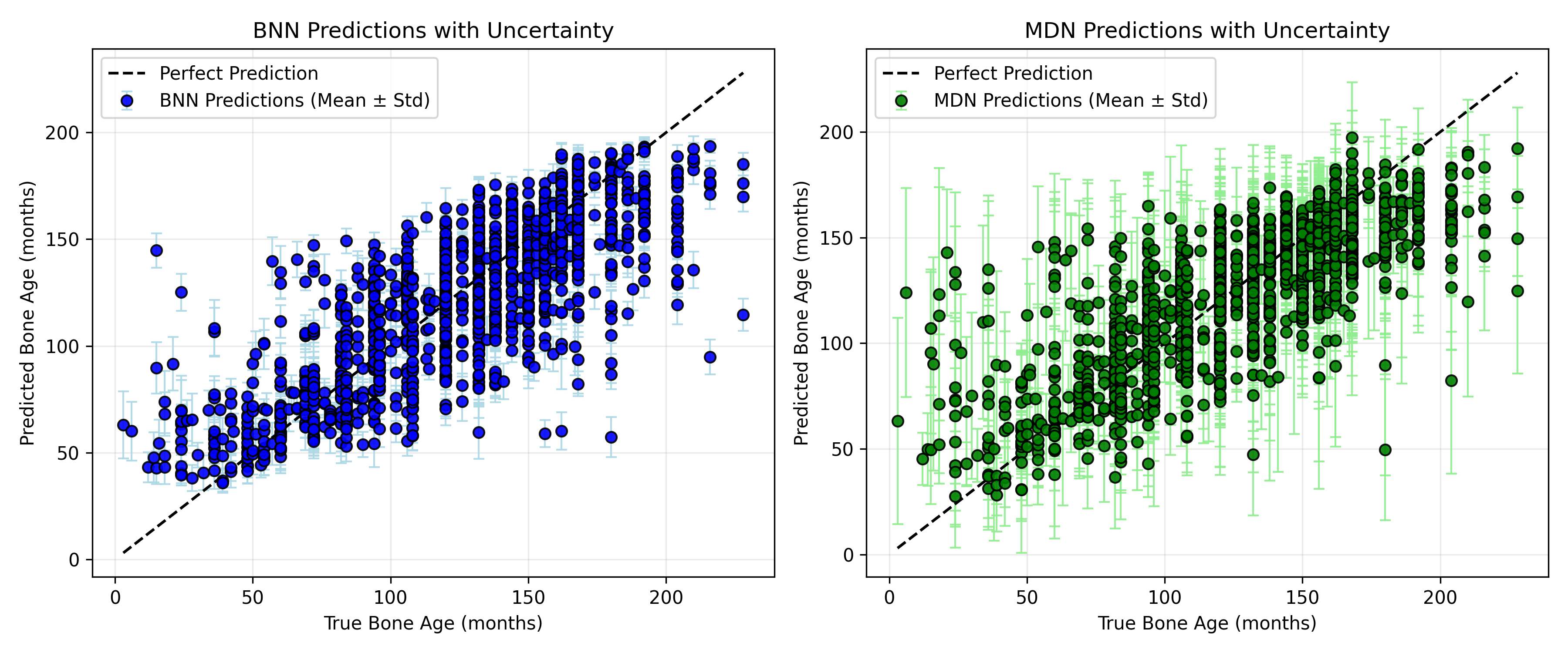}
\caption{\textbf{Predicted vs. True Bone Age on Validation Data.} Left: BNN predictions. Right: MDN predictions. Points represent mean predicted bone age; error bars indicate $\pm 1$ standard deviation. The dashed diagonal corresponds to perfect prediction.}
\label{fig:predictive_distributions_real_data}
\end{figure}

\subsubsection*{BNN Analysis (Left)}
The BNN captures the overall trend in bone age prediction, with predictions clustering around the perfect-prediction line. Uncertainty is heteroscedastic: smaller in regions well-represented by the training data and larger where data are sparse or ambiguous. This reflects the BNN’s ability to capture \textbf{epistemic uncertainty}. However, some outliers remain, where predictions deviate significantly with wide uncertainty bounds, likely due to posterior approximation challenges or highly complex image features.  

\subsubsection*{MDN Analysis (Right)}
The MDN also follows the diagonal trend but exhibits broader and more uniformly distributed uncertainty intervals across the age spectrum. Unlike the BNN, the MDN explicitly models \textbf{aleatoric uncertainty} and multimodality. While this plot summarizes predictions using mean $\pm$ std, the underlying Gaussian mixtures can represent multiple plausible bone ages for a single image. The broader but steadier uncertainty estimates suggest the MDN effectively captures inherent variability in bone development.  

\subsubsection*{Comparative Insights}
\begin{itemize}
    \item \textbf{Uncertainty quantification:}  
    BNNs emphasize epistemic uncertainty from data scarcity and model limitations, while MDNs emphasize aleatoric uncertainty and can capture multimodal predictions.  
    \item \textbf{Predictive behavior:}  
    Both models align reasonably well with the true bone age trend. BNN uncertainties expand in less confident regions, whereas MDN uncertainties remain broader but steadier across all ages.  
    \item \textbf{Clinical relevance:}  
    Uncertainty-aware predictions are essential in medical settings, where prediction confidence can guide expert review. MDNs may be particularly useful when genuine multimodality exists (e.g., ambiguous developmental patterns), whereas BNNs are valuable for quantifying confidence in data-limited scenarios.  
\end{itemize}

\noindent In summary, both BNNs and MDNs provide significant advantages over deterministic neural networks for medical imaging tasks such as bone age prediction. The choice between them depends on whether \textbf{epistemic uncertainty (BNNs)} or \textbf{aleatoric/multimodal uncertainty (MDNs)} is more critical for the intended application.


\section{Discussion}\label{discuss}

The comparative analysis reveals that Bayesian Neural Networks (BNNs) and Mixture Density Networks (MDNs) embody distinct yet complementary approaches to probabilistic modeling. BNNs provide a principled Bayesian treatment of parameter uncertainty, yielding interpretable measures of epistemic uncertainty that are especially valuable in low-data regimes. In contrast, MDNs, through their likelihood-based formulation, offer a flexible and efficient mechanism for modeling multimodal and heteroscedastic conditional relationships. Across both synthetic and empirical datasets, MDNs demonstrate stronger alignment with the true conditional density structure, achieving sharper and more adaptive uncertainty quantification. These findings highlight that the choice between BNNs and MDNs should depend on the dominant source of uncertainty in the application domain - epistemic versus aleatoric - and on the interpretability requirements of the task.

Despite their strengths, both models face limitations that point to several directions for future work. BNNs, while theoretically grounded, remain computationally demanding and sensitive to variational approximations, which can distort posterior uncertainty. MDNs, though capable of modeling rich distributions, may suffer from mode collapse or numerical instability when the number of mixture components is large. Furthermore, neither framework fully resolves the joint representation of epistemic and aleatoric uncertainty within a unified model. Future research should explore hybrid architectures that integrate Bayesian parameter inference with mixture-based output layers, as well as scalable inference schemes such as amortized or hierarchical variational methods to enhance both tractability and expressiveness in uncertainty-aware neural models.


\section{Appendix: Proofs of Theoretical Results}

This appendix provides detailed proofs for Lemmas~1--5 and Theorems~1--2 presented in the main text. Throughout, we denote by $f^*(y \mid x)$ the true conditional density of $Y$ given $X=x$, and by $\|\cdot\|_{\infty,\mathcal{X}}$ the supremum norm over $\mathcal{X}$. Constants denoted by $C, C_1, C_2, \ldots$ may vary from line to line but depend only on fixed problem parameters such as $s, d, \sigma_{\min}, \sigma_{\max},$ and $\varepsilon$.

\subsection{Proof of \Cref{lem:mixture-approx} (Finite-mixture identity)}
\label{proof:lem:mixture-approx}

\begin{proof}
By assumption (A1), $f^*(y \mid x) = \sum_{m=1}^M \pi_m(x)\phi(y; \mu_m(x),\sigma_m^2(x))$
for some $M$-component Gaussian mixture. Taking any $K \ge M$ and defining
$f_K(y \mid x) = \sum_{k=1}^K \pi_k(x)\phi(y; \mu_k(x),\sigma_k^2(x))$
with the first $M$ components identical to those of $f^*$ and setting the remaining
weights arbitrarily small so that $\sum_k \pi_k(x)=1$, we obtain $f_K=f^*$. Hence the KL divergence is zero. See also \cite{nguyen2013convergence} and Chapter 2 of \cite{mclachlan2000finite} for exact representation of mixtures.
\end{proof}


\subsection{Proof of \Cref{lem:relu-approx} (ReLU approximation of H\"older functions)}
\label{proof:relu-approx}

\begin{proof}
This follows from constructive approximation results for ReLU networks
\citep{yarotsky2017error, lu2017expressive}.
For every $s$–Hölder function $g$ there exists a ReLU network with width $\mathcal{O}(n)$ and depth $\mathcal{O}(\log n)$
such that $\|g-\tilde g_n\|_\infty = \mathcal{O}(n^{-s/d})$.
Applying this to each parameter function $(\pi_m,\mu_m,\sigma_m)$ in the mixture
yields the stated rate.
\end{proof}


\subsection{Proof of \Cref{lem:sup-to-kl} (Sup-norm parameter perturbation $\Rightarrow$ KL control)}
\label{proof:sup-to-kl}

\begin{proof}
(i) \textit{Component-level bound.}
For univariate Gaussians $p=N(\mu,\sigma^2)$ and $q=N(\tilde\mu,\tilde\sigma^2)$,
\[
\mathrm{KL}(p\|q)
  = \log\!\frac{\tilde\sigma}{\sigma}
    +\frac{\sigma^2+(\mu-\tilde\mu)^2}{2\tilde\sigma^2}-\tfrac12.
\]
A Taylor expansion around $(\tilde\mu,\tilde\sigma)=(\mu,\sigma)$ with $\sigma,\tilde\sigma\in[\sigma_{\min},\sigma_{\max}]$
gives $\mathrm{KL}(p\|q)\le C_0\big((\mu-\tilde\mu)^2+(\sigma-\tilde\sigma)^2\big)$.

(ii) \textit{Mixture-level bound.}
Let $f=\sum_k \pi_k p_k$ and $\tilde f=\sum_k \tilde\pi_k \tilde p_k$.
By convexity of KL and the decomposition inequality (Nguyen, 2013),
\[
\mathrm{KL}(f\|\tilde f)
 \le \mathrm{KL}(\pi\|\tilde\pi) + \sum_k \pi_k\,\mathrm{KL}(p_k\|\tilde p_k),
\]
where $\pi=(\pi_1,\ldots,\pi_K)$.
A Taylor expansion of $\mathrm{KL}(\pi\|\tilde\pi)$ around $\tilde\pi=\pi$ yields
$\mathrm{KL}(\pi\|\tilde\pi)\le (2\varepsilon)^{-1}\sum_k(\pi_k-\tilde\pi_k)^2$.
Combining both bounds yields
\[
\mathrm{KL}(f\|\tilde f)\le C_{\mathrm{KL}}\,K\,\varepsilon_{\mathrm{par}}^2.
\]
\end{proof}


\subsection{Proof of \Cref{lem:erm} (ERM concentration / estimation error)}
By empirical process theory \citep{van1996weak},
if $\log N(\epsilon,\mathcal{F},\|\cdot\|_\infty)\lesssim C\log(1/\epsilon)$, then
\[
\sup_{f\in\mathcal{F}} |(P_N-P)f|
   = \mathcal{O}_p\!\Big(\sqrt{\frac{C+\log(1/\delta)}{N}}\Big).
\]
Applying this to the uniformly bounded class $\{\log f : f\in\mathcal{F}_{n,K}\}$—boundedness ensured by (A2)—
yields the uniform deviation bound.  
Because $\hat f_{n,K}$ maximizes $P_N\log f$, the inequality
$P\log f_{n,K}-P\log\hat f_{n,K}\le\sup_f |(P_N-P)\log f|$
implies the stated KL deviation.
\label{proof:erm}

\subsection{Proof of \Cref{lem:pac-bayes} (PAC-Bayes inequality)}
\label{proof:pac-bayes}

\begin{proof}
The bound follows from the PAC–Bayesian theorem for bounded log-likelihood losses \citep{mcallester1998some,catoni2012challenging}:
for any prior $\pi$ and posterior $q$,
\[
\mathbb{E}_{X,Y}\mathbb{E}_{w\sim q}\ell(Y,X,w)
 \le \frac{1}{N}\sum_{i=1}^N\mathbb{E}_{w\sim q}\ell(Y_i,X_i,w)
   +\frac{\mathrm{KL}(q\|\pi)+\log(1/\delta)}{N},
\]
where $\ell(y,x,w)=-\log p(y\mid x,w)$.  
Writing $\mathbb{E}_{X,Y}\ell(Y,X,w)
 =\mathbb{E}_X\mathrm{KL}\big(f^*(\cdot\mid X)\|p(\cdot\mid X,w)\big)+H(f^*)$
and omitting the constant entropy term yields the claim.
\end{proof}




\medskip
\bibliographystyle{apacite}
	
\bibliography{reference}
\end{document}